\documentclass[10pt,journal]{IEEEtran}

\usepackage{amsmath,amssymb,amsfonts}
\usepackage{bm}
\usepackage{cite}
\usepackage{graphicx}
\usepackage{mathtools} 
\usepackage{color}
\newcommand{\C}{\mathbb{C}}
\newcommand{\R}{\mathbb{R}}
\newcommand{\E}{\mathbb{E}}
\newcommand{\diag}{\operatorname{diag}}
\newcommand{\jmathc}{\mathrm{j}}
\newcommand{\norm}[1]{\left\lVert #1\right\rVert}
\usepackage[caption=false,font=footnotesize]{subfig}
\newtheorem{proposition}{Proposition}
\newtheorem{corollary}{Corollary}

\begin{document}

\title{One-Shot Beam Tracking for
Stacked Intelligent Metasurfaces-Assisted LEO Broadcasting}

\author{Chandan~Kumar~Sheemar,~\IEEEmembership{Member,~IEEE},
        Giovanni~Iacovelli,~\IEEEmembership{Member,~IEEE}, Sourabh Solanki,~\IEEEmembership{Member,~IEEE}, Wali Ullah Khan,~\IEEEmembership{Member,~IEEE},
        and~Symeon Chatzinotas,~\IEEEmembership{Fellow,~IEEE}
\thanks{The authors are with the Interdisciplinary Centre for Security,
   Reliability and Trust (SnT), University of Luxembourg, Luxembourg
   (e-mail: \{chandan.sheemar, giovanni.iacovelli, sourabh.solanki,
   waliullah.khan, symeon.chatzinotas\}@uni.lu).}
}

\maketitle

\begin{abstract}
This letter develops a physics-informed one-shot beam tracking method
for stacked intelligent metasurface (SIM)-enabled low-Earth-orbit
(LEO) links with $K$ users and an $L$-layer SIM with $N$ meta-atoms
per layer. The predictable orbital motion is propagated through the
line-of-sight (LoS) geometry to forecast the finite-interval variation
of the user signal-to-noise ratio (SNR) profile, and an analytical
multilayer Jacobian relates this variation to the $LN$ SIM phase
coefficients. Beam tracking is formulated as normalized SNR-profile
preservation and solved through a single regularized Gauss--Newton
correction that requires one Jacobian evaluation and a $K\times K$
linear solve, thereby avoiding iterative reoptimization. A
tracking-error bound separates the regularization residual from the
nonlinear Taylor remainder and establishes a second-order local error
under a full-row-rank Jacobian. Numerical results show that the
proposed method retains approximately $94\%$ and $92\%$ of the
iterative-reoptimization rate for $L=2$ and $L=4$, respectively, while
running about $350\times$ and $260\times$ faster at $N=225$, and
confirm the predicted second-order residual error behavior.
\end{abstract}
\begin{IEEEkeywords}
Stacked intelligent metasurface, beam tracking, LEO satellite,
Gauss--Newton.
\end{IEEEkeywords}

\section{Introduction}

Low-Earth-orbit (LEO) satellite networks are emerging as a key
component of future non-terrestrial systems because they combine wide
coverage with substantially lower propagation delay than geostationary
links \cite{kumar20235g,khan2024reconfigurable}. Their high orbital
velocity, however, causes the satellite--terminal geometry to evolve
rapidly, which in turn changes the spatial beam required to maintain a
desired ground illumination pattern. For large-aperture payloads,
continuously tracking this evolution under stringent onboard processing
and latency constraints is therefore a central signal-processing
challenge.

Stacked intelligent metasurfaces (SIMs) enable high-dimensional
wave-domain beamforming through cascaded programmable layers, thereby
reducing the reliance on conventional digital RF processing
\cite{sheemar2026survey,sheemar2026energy}. This makes them attractive
for computation-constrained LEO payloads. However, the resulting spatial
transformation depends on many coupled phase variables across the SIM
layers, which must be continuously adapted as the satellite geometry
evolves. Repeatedly reoptimizing all SIM coefficients at each tracking
instant can therefore offset the very complexity advantage that motivates
wave-domain processing.

Existing work has mainly focused on SIM design for static or slowly
varying channel states. Multiuser wave-domain beamforming and joint
phase/power optimization were studied in \cite{an2025stacked}, while
\cite{lin2024stacked} considered SIM-enabled LEO communications based on
statistical channel state information (CSI) and iterative multilayer
optimization. Dynamic beam alignment has been identified as a key
challenge for SIM-enabled satellite networks \cite{zhang2026stacked}.
Predictive beamforming for highly mobile LEO links has exploited
location or state prediction \cite{xv2025predictive}, but only within
conventional array-domain architectures. None of these works provides a
sensitivity model of the multilayer SIM response to the predictable
geometry variation, nor a non-iterative rule that updates all layers
jointly.

To fill this gap, we develop a one-shot beam tracker that predicts the
future geometry and directly compensates the resulting signal variation
in the SIM control wave-domain, instead of repeatedly solving a new
high-dimensional beamforming problem. The main contributions are:
\begin{itemize}
\item Under a circular Keplerian orbit, the sampled satellite position
is shown to obey an exact two-step recurrence whose single coefficient
is fixed by the orbital rate. Propagating it through range, direction,
Earth rotation, and local-vertical/local-horizontal (LVLH) attitude
yields the finite-interval signal-to-noise-ratio (SNR) evolution without linearizing the
satellite displacement.

\item Tracking is formulated as normalized SNR-profile preservation.
The complete $K\times LN$ phase Jacobian of the multilayer SIM is
derived in closed form, and the resulting local nonlinear least-squares
problem is solved by a single regularized Gauss--Newton step that
updates all layers jointly at the cost of one Jacobian evaluation and only a
$K\times K$ solve.

\item A tracking-error bound separates the regularization residual
from the nonlinear phase-domain error. For a full-row-rank Jacobian,
the unregularized update cancels the linearized SNR mismatch, leaving a
second-order Taylor residual that is verified numerically.
\end{itemize}

The proposed framework thus exploits the predictable physics of LEO
motion to replace repeated iterative SIM reoptimization with a single
joint multilayer correction at each tracking
instant.

\emph{Notation:} lowercase and uppercase boldface letters
denote vectors and matrices; $(\cdot)^T$, $(\cdot)^H$, and
$(\cdot)^\dagger$ denote transpose, Hermitian transpose, and
pseudoinverse; $\diag(\cdot)$ is a diagonal matrix; $\Re\{\cdot\}$ is
the real part; $\|\cdot\|_2$ is the Euclidean norm; $\bm e_n$ is the
$n$-th column of $\bm I_N$; $\bm 1_K$ is the all-ones vector; and
$\mathbb{R}$ and $\mathbb{C}$ are the real and complex fields.

\section{Dynamic Signal Model}

\subsection{Multilayer SIM-Enabled Broadcasting}

\begin{figure}
    \centering
\includegraphics[width=0.8\linewidth,height=5cm,keepaspectratio]{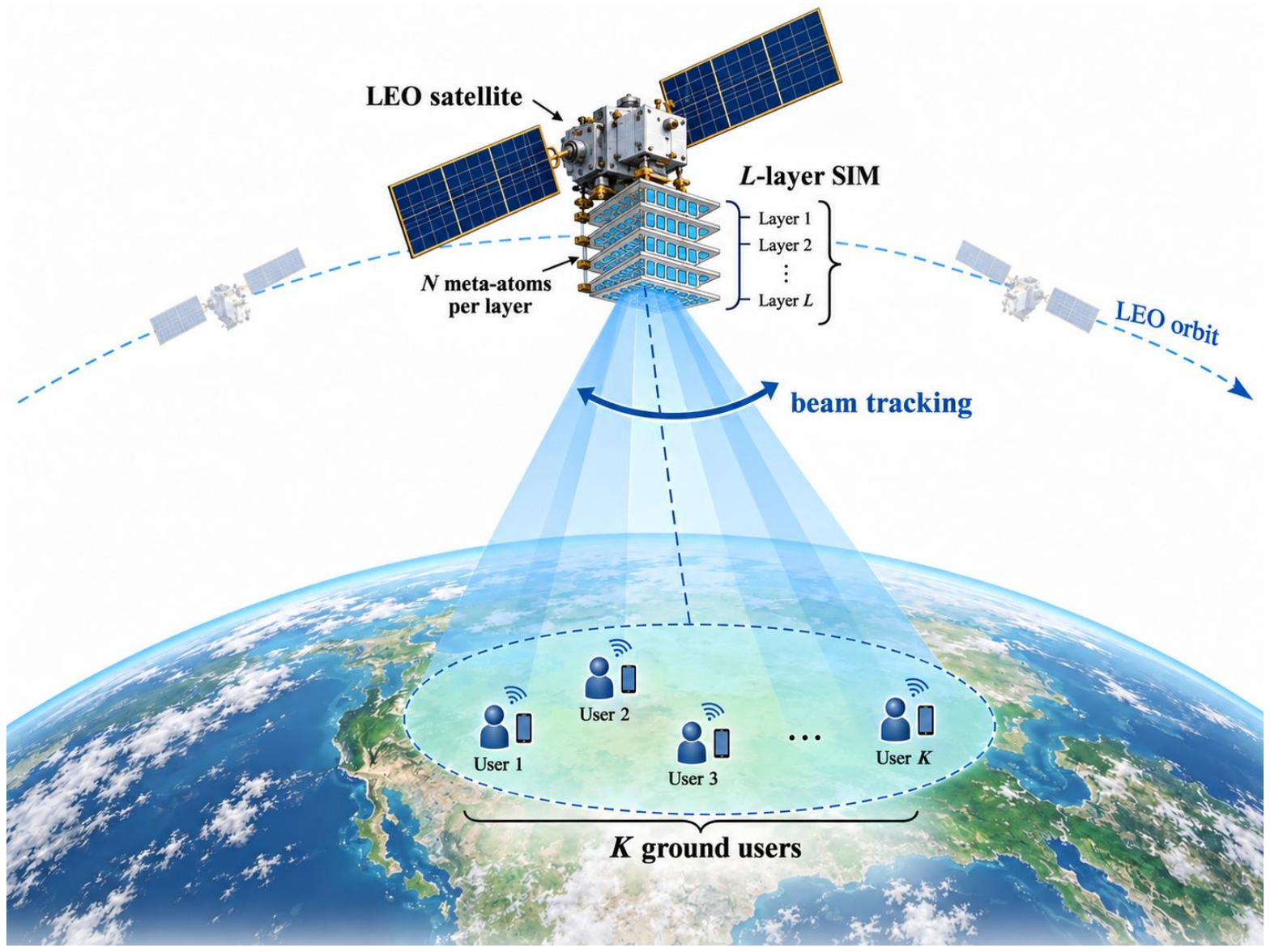}
    \caption{SIM-assisted LEO communications.}
    \label{fig:system}
\end{figure}
Consider a LEO satellite broadcasting a common unit-power symbol
$s[m]$, with $\E\{|s[m]|^2\}=1$, to $K$ single-antenna ground
terminals. A single RF feed illuminates an $L$-layer SIM composed of
$N$ independently controlled meta-atoms per layer. Let
$\bm f\in\C^N$ denote the fixed electromagnetic coupling from the feed
to the first SIM layer. The phase-control matrix of layer $l$ is
\begin{equation}
\bm\Phi_l[m]
=
\diag\!\left(
e^{\jmathc\theta_{l,1}[m]},\ldots,
e^{\jmathc\theta_{l,N}[m]}
\right).
\label{eq:Phi}
\end{equation}
Let $\bm W_l\in\C^{N\times N}$ denote the fixed propagation matrix from
layer $l-1$ to layer $l$, determined by the SIM geometry and modeled
using the Rayleigh--Sommerfeld diffraction formulation
\cite{sheemar2026survey,sheemar2026energy}. The overall SIM transfer matrix is
\begin{equation}
\bm G(\bm\theta[m])
=
\bm\Phi_L[m]\bm W_L\cdots
\bm\Phi_2[m]\bm W_2\bm\Phi_1[m],
\label{eq:G}
\end{equation}
where $ \bm\theta[m]
=
[\bm\theta_1^T[m],\ldots,\bm\theta_L^T[m]]^T
\in\R^{LN}$. Let $P$ and $\bm{h}_k$ denote the total transmit power and the channel for user $k$, respectively. The received signal at terminal $k$ is
\begin{equation}
y_k[m]
=
\sqrt P\,
\bm h_k^H[m]\bm G(\bm\theta[m])\bm f\,s[m]+n_k[m],
\label{eq:yk}
\end{equation}
where $n_k[m]\sim\mathcal{CN}(0,\sigma_k^2)$ is noise. Defining
$ 
z_k[m]
\triangleq
\sqrt P\,
\bm h_k^H[m]\bm G(\bm\theta[m])\bm f,
$
the received power and SNR are
\begin{equation}
p_k[m]=|z_k[m]|^2,
\quad
\gamma_k[m]=\frac{|z_k[m]|^2}{\sigma_k^2},
\label{eq:snr}
\end{equation}
and the common broadcast rate is
\begin{equation}
R_{\rm BC}[m]
=
\min_{k}\log_2\!\left(1+\gamma_k[m]\right).
\label{eq:bc-rate}
\end{equation}

The terminals are assumed to remain visible within the considered
tracking interval. Carrier-frequency and timing synchronization are
assumed to compensate the deterministic Doppler and delay evolution;
the present work therefore focuses on \emph{spatial} beam tracking.
Furthermore, $\bm f$ and $\{\bm W_l\}$ are fixed by the internal SIM
geometry.

\subsection{Geometry-Based LoS Channel}

Let $\bm r_s(t)$ and $\bm r_k(t)$ denote the satellite and terminal-$k$
positions in an Earth-centered inertial (ECI) frame. Define
\begin{equation}
\bm\rho_k=\bm r_k-\bm r_s,\quad
d_k=\norm{\bm\rho_k},\quad
\bm u_k=\frac{\bm\rho_k}{d_k},\quad
\bm u_k^B=\bm R_{BI}\bm u_k,
\label{eq:geometry}
\end{equation}
where $\bm R_{BI}$ transforms ECI coordinates into the satellite body
frame. Let $\bm q_n\in\R^3$ denote the position of the $n$-th meta-atom
on the final SIM layer relative to its phase center. Under LoS
propagation,
 \begin{equation}
     \bm h_k(t) =
\alpha_k(t)\bm a_k(t),\quad
[\bm a_k(t)]_n
=
e^{-\jmathc k_0\bm q_n^T\bm u_k^B(t)},
\label{eq:steering}
 \end{equation}
 
 \begin{equation}
     k_0 =\frac{2\pi}{\lambda}, \;
\alpha_k(t)
=
\sqrt{g_tg_{r,k}}\,
\frac{\lambda}{4\pi d_k(t)}
e^{-\jmathc k_0d_k(t)} ,
\label{eq:pathgain}
 \end{equation}
where $g_t$ and $g_{r,k}$ are fixed feed/element and terminal gains;
the directional transmit response is produced by $\bm G(\bm\theta)\bm f$
and is not included in $g_t$. Thus,
under the adopted LoS model, the channel is deterministically specified
by the satellite state, attitude, and terminal positions.

\subsection{Physics-Constrained Orbital Recurrence}

Consider first a circular two-body orbit with orbital radius
$R_o=R_E+h$ and angular rate $ 
\omega_o=\sqrt{\frac{\mu_E}{R_o^3}},
$
where $R_E$, $h$, and $\mu_E$ denote the Earth radius, satellite
altitude, and Earth gravitational parameter, respectively
\cite{vallado2001fundamentals}. Let $\bm p_o,\bm q_o\in\R^3$ be
orthonormal vectors spanning the orbital plane. The satellite ECI
position is
\begin{equation}
\bm r_s(t)
=
R_o\!
\left[
\bm p_o\cos(\omega_ot+\nu_0)
+
\bm q_o\sin(\omega_ot+\nu_0)
\right].
\label{eq:orbit}
\end{equation}

\begin{proposition}
Sampling \eqref{eq:orbit} at $t=mT_s$ yields the exact recurrence
\begin{equation}
\bm r_s[m+1]
=
2c_o\bm r_s[m]-\bm r_s[m-1],
\qquad
c_o\triangleq\cos(\omega_oT_s).
\label{eq:ar2}
\end{equation}
\end{proposition}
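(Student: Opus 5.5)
The plan is to apply the sum-to-product identities for cosine and sine to the sampled orbit \eqref{eq:orbit}. Write $\phi_m \triangleq \omega_o m T_s + \nu_0$ and $\delta \triangleq \omega_o T_s$. Then \eqref{eq:orbit} sampled at $t=mT_s$ reads
\begin{equation*}
\bm r_s[m]=R_o\left[\bm p_o\cos\phi_m+\bm q_o\sin\phi_m\right],
\end{equation*}
with $\phi_{m\pm1}=\phi_m\pm\delta$. The fixed vectors $\bm p_o,\bm q_o$ and the radius $R_o$ enter linearly. It therefore suffices to prove the scalar identities
\begin{equation*}
\cos(\phi+\delta)+\cos(\phi-\delta)=2\cos\delta\cos\phi,\qquad
\sin(\phi+\delta)+\sin(\phi-\delta)=2\cos\delta\sin\phi.
\end{equation*}
These follow at once by expanding with the angle-addition formulas, because the $\sin\phi\sin\delta$ and $\cos\phi\sin\delta$ cross terms cancel.

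Next I would take the linear combination: multiply the first identity by $R_o\bm p_o$ and the second by $R_o\bm q_o$, then add. This gives
\begin{equation*}
\bm r_s[m+1]+\bm r_s[m-1]=2\cos(\omega_oT_s)\,\bm r_s[m].
\end{equation*}
Rearranging yields \eqref{eq:ar2} with $c_o=\cos(\omega_oT_s)$. The identities hold for every real $\phi$, so the recurrence is exact for all $m$ and all initial phases $\nu_0$, with no small-$T_s$ approximation. I would also note that $c_o$ depends only on $\omega_o$ and $T_s$, and not on $\nu_0$, $\bm p_o$, $\bm q_o$ or $R_o$. This supports the paper's remark that the single coefficient is fixed by the orbital rate.

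An alternative route gives the same result. Write $\bm r_s[m]=\Re\{R_o(\bm p_o-\jmathc\bm q_o)e^{\jmathc\phi_m}\}$ and observe that $e^{\jmathc\delta}+e^{-\jmathc\delta}=2\cos\delta$. This makes clear that each coordinate is a sampled sinusoid of frequency $\omega_o$, and hence satisfies the AR(2) characteristic equation $z^2-2c_oz+1=0$, whose roots are $e^{\pm\jmathc\delta}$.

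I do not expect a genuine obstacle; the argument is routine trigonometry. The only point needing care is index bookkeeping: $\phi_{m+1}$ and $\phi_{m-1}$ must be symmetric about $\phi_m$, which requires uniform sampling at $t=mT_s$. I would state explicitly that the ECI-frame position is used, so Earth rotation does not enter at this stage, and that exactness relies on the circular two-body model assumed in \eqref{eq:orbit}.
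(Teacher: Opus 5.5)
Your proposal is correct and follows essentially the same route as the paper's proof: both set $\phi_{m\pm1}=\phi_m\pm\omega_oT_s$, apply the angle-sum identities to the $\bm p_o$ and $\bm q_o$ components, and obtain $\bm r_s[m+1]+\bm r_s[m-1]=2\cos(\omega_oT_s)\,\bm r_s[m]$. Your phasor/characteristic-root remark is a valid extra check, but it is not needed.
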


\begin{IEEEproof}
Let $\varphi_m=\omega_omT_s+\nu_0$ and $\Delta_o=\omega_oT_s$. By the
sum identities for sine and cosine,
$\bm r_s[m+1]+\bm r_s[m-1]
=
R_o\bm p_o\!\left[\cos(\varphi_m+\Delta_o)+\cos(\varphi_m-\Delta_o)\right]
+
R_o\bm q_o\!\left[\sin(\varphi_m+\Delta_o)+\sin(\varphi_m-\Delta_o)\right]
=
2\cos(\Delta_o)\bm r_s[m]$.
\end{IEEEproof}

Equation \eqref{eq:ar2} is a vector auto-regressive of second order (AR(2)) recursion whose coefficient
is dictated by the orbital dynamics rather than learned from data.
Model mismatch can be represented as
\begin{equation}
\bm r_s[m+1]
=
2c_o\bm r_s[m]-\bm r_s[m-1]+\bm\zeta_s[m],
\label{eq:perturbed}
\end{equation}
where $\bm\zeta_s[m]$ accounts for unmodeled perturbations and
ephemeris errors.

A ground terminal fixed in an Earth-centered Earth-fixed (ECEF) frame
evolves in ECI according to
\begin{equation}
\bm r_k[m+1]
=
\bm R_z(\omega_ET_s)\bm r_k[m],
\label{eq:earth}
\end{equation}
where $\omega_E$ is the Earth rotation rate. Quantities referring to
the predicted instant $m+1$ carry the superscript $+$. For a
nadir-pointing LVLH payload, define
\begin{align}
\bm h_o
&=
\frac{\bm r_s[m-1]\times\bm r_s[m]}
{\norm{\bm r_s[m-1]\times\bm r_s[m]}},
\quad
\bm e_z^+
=
-\frac{\bm r_s[m+1]}{R_o},
\nonumber\\
\bm e_x^+
&=
\frac{\bm e_z^+\times\bm h_o}
{\norm{\bm e_z^+\times\bm h_o}},
\quad
\bm e_y^+
=
\bm e_z^+\times\bm e_x^+,
\label{eq:lvlh}
\end{align}
which gives
\begin{equation}
\bm R_{BI}^+
=
[\bm e_x^+,\bm e_y^+,\bm e_z^+]^T .
\label{eq:RBI}
\end{equation}
Therefore, \eqref{eq:ar2}, \eqref{eq:earth}, and \eqref{eq:RBI}
completely determine the next LoS geometry state under the adopted circular
orbit and LVLH attitude model.

\section{Proposed One-Shot Beam Tracking}

\subsection{Finite-Interval Observation Prediction}

Define the LoS SNR mapping
\begin{equation}
\Gamma_k(\bm\xi;\bm\theta)
\triangleq
\frac{Pg_tg_{r,k}}{\sigma_k^2}
\left(\frac{\lambda}{4\pi d_k}\right)^2
\left|
\bm a_k^H(\bm u_k^B)
\bm G(\bm\theta)\bm f
\right|^2,
\label{eq:Gamma}
\end{equation}
where $\bm\xi$ collects the satellite position and attitude together
with the terminal positions, and define
$ 
\bm\Gamma(\bm\xi;\bm\theta)
=
[\Gamma_1(\bm\xi;\bm\theta),\ldots,
\Gamma_K(\bm\xi;\bm\theta)]^T$. The SIM is fully optimized only at
initialization. Specifically, $\bm\theta[0]$ is obtained from the
max--min broadcast problem
\begin{equation}
\bm\theta[0]\in
\arg\max_{\bm\theta\in[0,2\pi)^{LN}}
\min_k\log_2\!\left(1+\Gamma_k(\bm\xi[0];\bm\theta)\right).
\label{eq:init}
\end{equation}
In implementation, \eqref{eq:init} is solved by projected gradient
ascent (PGA) with Armijo backtracking on the differentiable soft-min
surrogate
\begin{equation}
\mathcal R_\tau(\bm\theta)
=
-\frac{1}{\tau}
\log
\sum_{k=1}^{K}
\exp\!\left[
-\tau\log_2(1+\Gamma_k(\bm\xi[0];\bm\theta))
\right],
\label{eq:softmin}
\end{equation}
where larger $\tau$ gives a tighter approximation of the minimum rate.
The same optimizer is subsequently used for the full-reoptimization
benchmark. Let $ 
\bm\gamma^{\rm ref}
\triangleq
\bm\Gamma(\bm\xi[0];\bm\theta[0])$ and
$\bm D_\gamma
\triangleq
\diag(\bm\gamma^{\rm ref})$, and define the normalized SNR profile
\begin{equation}
\overline{\bm\Gamma}(\bm\xi;\bm\theta)
\triangleq
\bm D_\gamma^{-1}\bm\Gamma(\bm\xi;\bm\theta).
\label{eq:normalizedGamma}
\end{equation}
Thus, the desired footprint corresponds to
$\overline{\bm\Gamma}=\bm 1_K$. The normalization prevents high-SNR
terminals from dominating the tracking objective and gives equal
weight to the \emph{relative} SNR variation of every terminal.

At tracking instant $m$, \eqref{eq:ar2} produces
\begin{equation}
\bm r_s^+[m]
=
2c_o\bm r_s[m]-\bm r_s[m-1],
\label{eq:rplus}
\end{equation}
which, jointly with \eqref{eq:earth} and \eqref{eq:RBI}, determines
the future geometry $\bm\xi^+[m]$. Keeping the SIM fixed at
$\bm\theta[m]$, the normalized SNR variation over the tracking
interval is
\begin{equation}
\overline{\bm\delta}^{\rm obs}[m]
=
\overline{\bm\Gamma}(\bm\xi^+[m];\bm\theta[m])
-
\overline{\bm\Gamma}(\bm\xi[m];\bm\theta[m]).
\label{eq:forcing}
\end{equation}
Since \eqref{eq:ar2}, \eqref{eq:earth}, and \eqref{eq:RBI} give
$\bm\xi^+[m]$ exactly under the adopted models and \eqref{eq:Gamma} is
a deterministic mapping of the geometry, \eqref{eq:forcing} is exact
and involves no linearization of the orbital displacement.

Define the normalized accumulated tracking error
\begin{equation}
\overline{\bm\varepsilon}[m]
=
\overline{\bm\Gamma}(\bm\xi[m];\bm\theta[m])
-\bm 1_K .
\label{eq:tracking-error}
\end{equation}
The mismatch at the subsequent geometry \emph{before} SIM
reconfiguration is consequently
\begin{equation}
\overline{\bm b}[m]
=
\overline{\bm\varepsilon}[m]
+
\overline{\bm\delta}^{\rm obs}[m]
=
\overline{\bm\Gamma}(\bm\xi^+[m];\bm\theta[m])
-\bm 1_K .
\label{eq:b-mismatch}
\end{equation}
Hence, each update compensates both the newly predicted variation and
any residual error accumulated in preceding intervals.

\subsection{Multilayer Phase Sensitivity}

Factor the SIM transfer matrix around layer $l$ as
$ \bm G=\bm V_l\bm\Phi_l\bm U_l$,
where
\begin{align}
\bm U_l
&=
\begin{cases}
\bm I_N, & l=1,\\
\bm W_l\bm\Phi_{l-1}\cdots\bm W_2\bm\Phi_1, & l>1,
\end{cases}
\nonumber\\[-1mm]
\bm V_l
&=
\begin{cases}
\bm I_N, & l=L,\\
\bm\Phi_L\bm W_L\cdots\bm\Phi_{l+1}\bm W_{l+1}, & l<L .
\end{cases}
\label{eq:UV}
\end{align}
Since $ 
\frac{\partial\bm\Phi_l}{\partial\theta_{l,n}}
=
\jmathc e^{\jmathc\theta_{l,n}}
\bm e_n\bm e_n^T$, 
the derivative of the complete multilayer transfer function is
\begin{equation}
\frac{\partial\bm G}{\partial\theta_{l,n}}
=
\jmathc e^{\jmathc\theta_{l,n}}
\bm V_l\bm e_n\bm e_n^T\bm U_l .
\label{eq:dG}
\end{equation}
At the predicted geometry, let $
z_k^+(\bm\theta)
=
\sqrt P\,(\bm h_k^+)^H\bm G(\bm\theta)\bm f$.
Then
\begin{equation}
\frac{\partial z_k^+}{\partial\theta_{l,n}}
=
\jmathc\sqrt P\,e^{\jmathc\theta_{l,n}}
\left((\bm h_k^+)^H\bm V_l\bm e_n\right)
\left(\bm e_n^T\bm U_l\bm f\right),
\label{eq:dz}
\end{equation}
and
\begin{equation}
\frac{\partial\Gamma_k^+}{\partial\theta_{l,n}}
=
\frac{2}{\sigma_k^2}
\Re\!\left\{
(z_k^+)^*
\frac{\partial z_k^+}{\partial\theta_{l,n}}
\right\}.
\label{eq:dgamma}
\end{equation}
Collecting \eqref{eq:dgamma} for every user, layer, and meta-atom gives
\begin{equation}
\bm J_\theta^+[m]
\triangleq
\left.
\frac{\partial
\bm\Gamma(\bm\xi^+[m];\bm\theta)}
{\partial\bm\theta^T}
\right|_{\bm\theta=\bm\theta[m]}
\in\R^{K\times LN}.
\label{eq:Jraw}
\end{equation}
Correspondingly, the normalized phase Jacobian is
\begin{equation}
\overline{\bm J}_\theta^+[m]
\triangleq
\bm D_\gamma^{-1}\bm J_\theta^+[m].
\label{eq:J}
\end{equation}
Thus, one $K\times LN$ matrix captures the sensitivity of the whole
normalized footprint to all $LN$ coupled phases.

\subsection{One-Shot Nonlinear Correction}

At $\bm\xi^+[m]$, the desired configuration should recover the
reference normalized footprint. Ideally,
\begin{equation}
\overline{\bm\Gamma}
\left(
\bm\xi^+[m];
\bm\theta[m]+\Delta\bm\theta[m]
\right)
=
\bm 1_K .
\label{eq:ideal}
\end{equation}
Because $\bm G(\bm\theta)$ contains cascaded products of $L$
phase-dependent matrices, \eqref{eq:ideal} is a nonlinear system in
$LN$ coupled variables. Solving it iteratively at every tracking
instant would largely eliminate the computational advantage obtained
from predicting the orbital evolution.

We instead seek a small phase correction that restores the normalized
SNR profile by considering the regularized nonlinear least-squares
problem
\begin{equation}
\begin{aligned}
\min_{\Delta\bm\theta}\quad&
\left\|
\overline{\bm\Gamma}
\left(
\bm\xi^+[m];
\bm\theta[m]+\Delta\bm\theta
\right)
-\bm 1_K
\right\|_2^2
+
\eta\norm{\Delta\bm\theta}_2^2 ,
\end{aligned}
\label{eq:nonlinear-problem}
\end{equation}
where $\eta>0$ regularizes poorly conditioned phase sensitivities
and penalizes excessive SIM reconfiguration.

A first-order expansion around $\bm\theta[m]$ gives
\begin{equation}
\begin{split}
&
\overline{\bm\Gamma}
\left(
\bm\xi^+[m];
\bm\theta[m]+\Delta\bm\theta
\right)
-\bm 1_K
=
\overline{\bm b}[m]
+
\overline{\bm J}_\theta^+[m]\Delta\bm\theta
+
\bm r_\theta(\Delta\bm\theta),
\end{split}
\label{eq:GN-full}
\end{equation}
where $\bm r_\theta(\Delta\bm\theta)$ contains the second- and
higher-order phase terms. Neglecting this remainder yields one damped
Gauss--Newton, equivalently Levenberg--Marquardt, step:
\begin{equation}
\min_{\Delta\bm\theta}
\;
\norm{
\overline{\bm J}_\theta^+[m]\Delta\bm\theta+
\overline{\bm b}[m]
}_2^2
+
\eta\norm{\Delta\bm\theta}_2^2 .
\label{eq:LS}
\end{equation}

\begin{proposition}[One-shot multilayer update]
For $\eta>0$, \eqref{eq:LS} is strictly convex and has the unique
solution
\begin{equation}
\Delta\bm\theta^\star[m]
=
-
(\overline{\bm J}_\theta^+)^T
\left[
\overline{\bm J}_\theta^+
(\overline{\bm J}_\theta^+)^T
+
\eta\bm I_K
\right]^{-1}
\overline{\bm b}[m].
\label{eq:update}
\end{equation}
\end{proposition}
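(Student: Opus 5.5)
Write $\bm J\triangleq\overline{\bm J}_\theta^+[m]\in\R^{K\times LN}$ and $\bm b\triangleq\overline{\bm b}[m]\in\R^K$. The objective of \eqref{eq:LS} is $f(\Delta\bm\theta)=\norm{\bm J\Delta\bm\theta+\bm b}_2^2+\eta\norm{\Delta\bm\theta}_2^2$. The plan is to (i) establish strict convexity from the Hessian, (ii) obtain the unique minimizer from the normal equations in the $LN\times LN$ form, and (iii) convert it to the $K\times K$ form in \eqref{eq:update} via a push-through identity.

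\textbf{Step (i): convexity and the normal equations.} Expanding the objective gives
\begin{equation*}
f(\Delta\bm\theta)=\Delta\bm\theta^T(\bm J^T\bm J+\eta\bm I_{LN})\Delta\bm\theta+2\bm b^T\bm J\Delta\bm\theta+\bm b^T\bm b .
\end{equation*}
This is a quadratic with Hessian $2(\bm J^T\bm J+\eta\bm I_{LN})$. The matrix $\bm J^T\bm J$ is positive semidefinite, so for $\eta>0$ every eigenvalue of the Hessian is at least $2\eta>0$. Hence $f$ is strictly (indeed strongly) convex and coercive, so a unique minimizer exists. It is characterized by the vanishing gradient,
\begin{equation*}
(\bm J^T\bm J+\eta\bm I_{LN})\Delta\bm\theta=-\bm J^T\bm b ,
\end{equation*}
which gives $\Delta\bm\theta^\star=-(\bm J^T\bm J+\eta\bm I_{LN})^{-1}\bm J^T\bm b$.

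\textbf{Step (ii): reduction to the $K\times K$ form.} The matrix $\bm J\bm J^T+\eta\bm I_K$ is also positive definite, hence invertible. Start from the identity
\begin{equation*}
\bm J^T(\bm J\bm J^T+\eta\bm I_K)=(\bm J^T\bm J+\eta\bm I_{LN})\bm J^T ,
\end{equation*}
which holds because both sides equal $\bm J^T\bm J\bm J^T+\eta\bm J^T$. Left-multiply by $(\bm J^T\bm J+\eta\bm I_{LN})^{-1}$ and right-multiply by $(\bm J\bm J^T+\eta\bm I_K)^{-1}$. This yields the push-through identity
\begin{equation*}
(\bm J^T\bm J+\eta\bm I_{LN})^{-1}\bm J^T=\bm J^T(\bm J\bm J^T+\eta\bm I_K)^{-1}.
\end{equation*}
Substituting it into the minimizer from Step (i) gives exactly \eqref{eq:update}.

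As a cross-check, one can verify directly that the right-hand side of \eqref{eq:update} satisfies the normal equations. Multiplying it by $\bm J^T\bm J+\eta\bm I_{LN}$ and applying the same identity returns $-\bm J^T\bm b$. Uniqueness from Step (i) then closes the argument.

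\textbf{Main obstacle.} No step is genuinely difficult. The only point that needs care is justifying both inverses, which $\eta>0$ provides, and stating the push-through identity cleanly. It is also worth noting that $\bm J$ is real, since \eqref{eq:dgamma} takes a real part. This means transposes rather than Hermitian transposes are correct throughout, and that $\bm J^T\bm J\succeq\bm 0$ holds in the real sense.
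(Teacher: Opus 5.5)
Your proposal is correct and follows essentially the same route as the paper. Both use positive definiteness of the Hessian $2(\bm J^T\bm J+\eta\bm I_{LN})$ for strict convexity, then the stationarity condition $(\bm J^T\bm J+\eta\bm I_{LN})\Delta\bm\theta=-\bm J^T\bm b$, and finally the push-through identity $(\bm J^T\bm J+\eta\bm I_{LN})^{-1}\bm J^T=\bm J^T(\bm J\bm J^T+\eta\bm I_K)^{-1}$ to obtain the $K\times K$ form.
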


 \begin{IEEEproof}
The objective of \eqref{eq:LS} is a quadratic whose Hessian,
$2[(\overline{\bm J}_\theta^+)^T\overline{\bm J}_\theta^+
+\eta\bm I_{LN}]$, satisfies
$\bm x^T[(\overline{\bm J}_\theta^+)^T\overline{\bm J}_\theta^+
+\eta\bm I_{LN}]\bm x
=\norm{\overline{\bm J}_\theta^+\bm x}_2^2+\eta\norm{\bm x}_2^2>0$
for every $\bm x\neq\bm 0$ and $\eta>0$. Hence \eqref{eq:LS} is
strictly convex and its unique minimizer solves the stationarity
condition $
\left[
(\overline{\bm J}_\theta^+)^T\overline{\bm J}_\theta^+
+\eta\bm I_{LN}
\right]\Delta\bm\theta^\star
=
-(\overline{\bm J}_\theta^+)^T\overline{\bm b}$. Since $(\bm A^T\bm A+\eta\bm I_{LN})\bm A^T
=\bm A^T(\bm A\bm A^T+\eta\bm I_K)$ for any
$\bm A\in\R^{K\times LN}$, with both parenthesized matrices invertible
for $\eta>0$, one has
$(\bm A^T\bm A+\eta\bm I_{LN})^{-1}\bm A^T
=\bm A^T(\bm A\bm A^T+\eta\bm I_K)^{-1}$.   Applying this identity with $\bm A=\overline{\bm J}_\theta^+$ to the
   solution of the stationarity condition yields \eqref{eq:update}.
\end{IEEEproof}

Although $\Delta\bm\theta^\star\in\R^{LN}$ simultaneously updates
every meta-atom of every SIM layer, \eqref{eq:update} inverts only a
$K\times K$ matrix and therefore avoids both an $LN\times LN$
inversion and layer-by-layer alternating optimization; the dominant
cost is a single Jacobian evaluation (Sec.~III-E). Moreover, $\eta$
bounds the phase excursion: the singular values $\sigma_k$ of
$\overline{\bm J}_\theta^+$ enter \eqref{eq:update} through the
factors $\sigma_k/(\sigma_k^2+\eta)\leq1/(2\sqrt{\eta})$, so that
\begin{equation}
\norm{\Delta\bm\theta^\star[m]}_2
\leq
\frac{\norm{\overline{\bm b}[m]}_2}{2\sqrt{\eta}} .
\label{eq:excursion}
\end{equation}
Decreasing $\eta$ thus prioritizes footprint restoration, whereas
increasing it produces smoother and smaller phase excursions, and
\eqref{eq:excursion} bounds the nonlinear remainder of Sec.~III-D by
$\beta\norm{\overline{\bm b}[m]}_2^2/(8\eta)$.

 For ideal continuous-phase hardware,
\begin{equation}
\bm\theta[m+1]
=
\mathcal W_{2\pi}
\left(
\bm\theta[m]+\Delta\bm\theta^\star[m]
\right),
\label{eq:phase-update}
\end{equation}
where $\mathcal W_{2\pi}(\cdot)$ wraps every entry to $[0,2\pi)$.
With a maximum admissible phase change $\Delta_{\max}$,
\begin{equation}
\bm\theta[m+1]
=
\mathcal W_{2\pi}
\left[
\bm\theta[m]+
\mathcal C_{\Delta_{\max}}
(\Delta\bm\theta^\star[m])
\right],
\label{eq:bounded}
\end{equation}
where $\mathcal C_{\Delta_{\max}}(\cdot)$ clips each entry to
$[-\Delta_{\max},\Delta_{\max}]$. For a $B_\phi$-bit implementation,
\eqref{eq:bounded} is subsequently projected onto the discrete set
\begin{equation}
\mathcal Q_{B_\phi}
=
\left\{
0,\frac{2\pi}{2^{B_\phi}},\ldots,
\frac{2\pi(2^{B_\phi}-1)}{2^{B_\phi}}
\right\}.
\label{eq:quant}
\end{equation}

Combining \eqref{eq:ar2}, \eqref{eq:forcing}, and \eqref{eq:update}
gives the complete trajectory-to-SIM relation
\begin{equation}
\{\bm r_s[m-1],\bm r_s[m]\}
\rightarrow
\bm\xi^+[m]
\rightarrow
\overline{\bm b}[m]
\rightarrow
\Delta\bm\theta^\star[m]
\rightarrow
\bm\theta[m+1].
\label{eq:chain}
\end{equation}
Thus, two initial ephemeris positions suffice: in open-loop operation,
$\bm r_s[m-1]$ and $\bm r_s[m]$ are themselves earlier predictions.
\subsection{Tracking Accuracy}

Under the adopted orbital model, the approximation arises only from
the first-order phase linearization in \eqref{eq:GN-full}.

\begin{proposition}[Tracking-error bound]
Let $\overline{\bm J}_\theta(\bm\theta)$ denote the normalized
Jacobian at $\bm\xi^+[m]$, and assume it is $\beta$-Lipschitz, i.e.,
\begin{equation}
\norm{
\overline{\bm J}_\theta(\bm\theta_1)
-
\overline{\bm J}_\theta(\bm\theta_2)
}_2
\leq
\beta\norm{\bm\theta_1-\bm\theta_2}_2 ,
\label{eq:lipschitz}
\end{equation}
for all $\bm\theta_1,\bm\theta_2$ on the segment between $\bm\theta[m]$
and $\bm\theta[m]+\Delta\bm\theta^\star[m]$. Then
\begin{equation}
\begin{split}
&
\norm{
\overline{\bm\Gamma}
(\bm\xi^+;\bm\theta+\Delta\bm\theta^\star)
-\bm 1_K
}_2
\\
&\leq
\eta
\norm{
\left[
\overline{\bm J}_\theta^+
(\overline{\bm J}_\theta^+)^T
+\eta\bm I_K
\right]^{-1}
\overline{\bm b}
}_2
+
\frac{\beta}{2}
\norm{\Delta\bm\theta^\star}_2^2 .
\end{split}
\label{eq:error-bound}
\end{equation}
\end{proposition}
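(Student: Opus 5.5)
The plan is to split the post-update mismatch into the residual of the linearized model \eqref{eq:GN-full} and the Taylor remainder, bound each separately, and combine them with the triangle inequality. Write $\bm\theta=\bm\theta[m]$, $\bm A=\overline{\bm J}_\theta^+$ and $\bm M=\bm A\bm A^T+\eta\bm I_K$. Evaluating \eqref{eq:GN-full} at $\Delta\bm\theta=\Delta\bm\theta^\star$ gives
\begin{equation*}
\overline{\bm\Gamma}(\bm\xi^+;\bm\theta+\Delta\bm\theta^\star)-\bm 1_K
=\left(\overline{\bm b}+\bm A\Delta\bm\theta^\star\right)+\bm r_\theta(\Delta\bm\theta^\star).
\end{equation*}
Applying the triangle inequality to this identity, it suffices to show that $\norm{\overline{\bm b}+\bm A\Delta\bm\theta^\star}_2=\eta\norm{\bm M^{-1}\overline{\bm b}}_2$ and $\norm{\bm r_\theta(\Delta\bm\theta^\star)}_2\le\frac{\beta}{2}\norm{\Delta\bm\theta^\star}_2^2$.

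\emph{Linear residual.} Substitute the closed form \eqref{eq:update}, i.e. $\Delta\bm\theta^\star=-\bm A^T\bm M^{-1}\overline{\bm b}$. Then
\begin{equation*}
\overline{\bm b}+\bm A\Delta\bm\theta^\star=\left(\bm I_K-\bm A\bm A^T\bm M^{-1}\right)\overline{\bm b}=\left(\bm M-\bm A\bm A^T\right)\bm M^{-1}\overline{\bm b}=\eta\bm M^{-1}\overline{\bm b}.
\end{equation*}
This is an exact equality, not only a bound, and it produces the first term of \eqref{eq:error-bound}.

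\emph{Taylor remainder.} This is the step that needs care. Define $\bm g(\bm\vartheta)=\overline{\bm\Gamma}(\bm\xi^+;\bm\vartheta)$, which is smooth in the phases because $\bm G$ is a product of entrywise exponentials. By the fundamental theorem of calculus along the segment $\bm\theta+s\Delta\bm\theta^\star$, $s\in[0,1]$,
\begin{equation*}
\bm r_\theta(\Delta\bm\theta^\star)=\int_0^1\left[\overline{\bm J}_\theta(\bm\theta+s\Delta\bm\theta^\star)-\overline{\bm J}_\theta(\bm\theta)\right]\Delta\bm\theta^\star\,ds.
\end{equation*}
Here I use that $\overline{\bm J}_\theta(\bm\theta)=\bm A$ and $\bm g(\bm\theta)-\bm 1_K=\overline{\bm b}$ by \eqref{eq:b-mismatch}. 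Taking norms inside the integral and applying \eqref{eq:lipschitz}, which is assumed exactly on this segment, bounds the integrand by $\beta s\norm{\Delta\bm\theta^\star}_2^2$. Integrating $s$ over $[0,1]$ supplies the factor $1/2$.

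Two points need checking. First, the remainder $\bm r_\theta$ in \eqref{eq:GN-full} must be taken as exactly this integral expression, not merely an informal collection of higher-order terms. Second, the Lipschitz hypothesis is only required on the segment, which is exactly where the integral representation uses it.

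As a corollary, set $\eta\to0$ with $\bm A$ of full row rank. Then $\bm M^{-1}\to(\bm A\bm A^T)^{-1}$ remains bounded, so the first term vanishes. Only the second-order term $\frac{\beta}{2}\norm{\bm A^\dagger\overline{\bm b}}_2^2$ remains, which is the claimed local second-order error behavior.
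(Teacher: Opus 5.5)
Your proposal is correct and follows the paper's proof. Both proofs evaluate \eqref{eq:GN-full} at $\Delta\bm\theta^\star$ and apply the triangle inequality. One term is the exact linear residual $\eta\bm M^{-1}\overline{\bm b}$, which follows from \eqref{eq:update}. The other is the Taylor remainder, and you spell out its bound explicitly as $\int_0^1\beta s\,ds\,\norm{\Delta\bm\theta^\star}_2^2=\frac{\beta}{2}\norm{\Delta\bm\theta^\star}_2^2$ via the integral form under the segment-wise Lipschitz hypothesis, where the paper simply cites Taylor's theorem.
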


\begin{IEEEproof}
Taylor's theorem gives
$\norm{\bm r_\theta(\Delta\bm\theta^\star)}_2
\leq
\frac{\beta}{2}\norm{\Delta\bm\theta^\star}_2^2$.
Moreover, from \eqref{eq:update},
\begin{equation}
\overline{\bm b}
+
\overline{\bm J}_\theta^+\Delta\bm\theta^\star
=
\eta
\left[
\overline{\bm J}_\theta^+
(\overline{\bm J}_\theta^+)^T
+\eta\bm I_K
\right]^{-1}
\overline{\bm b}.
\label{eq:linear-residual}
\end{equation}
The result follows by the triangle inequality.
\end{IEEEproof}

The two terms in \eqref{eq:error-bound} respectively quantify the
regularization residual and the nonlinear phase error, the latter being
second order in $\norm{\Delta\bm\theta^\star}_2$.

\begin{corollary}
If $\overline{\bm J}_\theta^+$ has full row rank (which requires
$LN\geq K$) and $\eta\rightarrow0$, then
\begin{equation}
\Delta\bm\theta^\star
=
-
(\overline{\bm J}_\theta^+)^\dagger
\overline{\bm b},
\quad
\overline{\bm b}
+
\overline{\bm J}_\theta^+\Delta\bm\theta^\star
=
\bm 0.
\label{eq:pseudo}
\end{equation}
Hence, the minimum-norm update cancels the linearized normalized SNR
mismatch exactly, leaving only a second-order phase-domain residual.
\end{corollary}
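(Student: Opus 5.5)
The plan is to pass to the limit $\eta\to0$ in the closed-form update \eqref{eq:update} of the one-shot proposition, and then feed the result into the tracking-error bound \eqref{eq:error-bound}. Write $\bm A=\overline{\bm J}_\theta^+\in\R^{K\times LN}$. If $\bm A$ has full row rank, then $\operatorname{rank}\bm A=K\le LN$, which gives the stated necessity of $LN\geq K$. In this case the Gram matrix $\bm A\bm A^T\in\R^{K\times K}$ is positive definite, hence invertible.

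\textbf{Limit of the update.} Because matrix inversion is continuous on the invertible matrices and $\bm A\bm A^T+\eta\bm I_K\to\bm A\bm A^T$, we have $(\bm A\bm A^T+\eta\bm I_K)^{-1}\to(\bm A\bm A^T)^{-1}$. Substituting this into \eqref{eq:update} gives $\Delta\bm\theta^\star\to-\bm A^T(\bm A\bm A^T)^{-1}\overline{\bm b}$. For a full-row-rank matrix, the right pseudoinverse $\bm A^T(\bm A\bm A^T)^{-1}$ satisfies all four Moore--Penrose conditions, so it equals $\bm A^\dagger$. This establishes the first identity in \eqref{eq:pseudo}. One can also see this through the SVD: the factors $\sigma_k/(\sigma_k^2+\eta)$ converge to $1/\sigma_k$ because every $\sigma_k>0$.

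\textbf{Cancellation of the linearized mismatch.} We have $\bm A\bm A^\dagger=\bm A\bm A^T(\bm A\bm A^T)^{-1}=\bm I_K$, and therefore $\overline{\bm b}+\bm A\Delta\bm\theta^\star=\overline{\bm b}-\overline{\bm b}=\bm 0$. This agrees with \eqref{eq:linear-residual}, whose right-hand side $\eta(\bm A\bm A^T+\eta\bm I_K)^{-1}\overline{\bm b}$ has norm at most $\eta\,\sigma_{\min}^{-2}(\bm A)\norm{\overline{\bm b}}_2\to0$.

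\textbf{Minimum norm.} Among all solutions of $\bm A\bm x=-\overline{\bm b}$, the vector $-\bm A^\dagger\overline{\bm b}$ lies in $\operatorname{range}(\bm A^T)$, which is orthogonal to $\ker\bm A$. It is therefore the minimum-norm solution.

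\textbf{Residual.} Letting $\eta\to0$ in \eqref{eq:error-bound} kills the first term. Since $\Delta\bm\theta^\star$ converges, the limiting update remains well defined, provided the Lipschitz assumption \eqref{eq:lipschitz} holds on the limiting segment. The bound then reduces to $\frac{\beta}{2}\norm{(\overline{\bm J}_\theta^+)^\dagger\overline{\bm b}}_2^2\le\frac{\beta}{2}\sigma_{\min}^{-2}\norm{\overline{\bm b}}_2^2$, which is second order.

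The only delicate point is justifying the passage to the limit: continuity of the inverse at $\bm A\bm A^T$, and the validity of \eqref{eq:lipschitz} along the limiting segment. Both follow directly from full row rank and the stated hypotheses.
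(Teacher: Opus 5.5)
Your proposal is correct and follows the route the paper implicitly takes. The paper states the corollary without a separate proof, as a direct consequence of the closed-form update and the tracking-error bound; you let $\eta\to0$ in \eqref{eq:update}, use full row rank to identify $\bm A^T(\bm A\bm A^T)^{-1}$ with $\bm A^\dagger$ so that \eqref{eq:linear-residual} vanishes, and keep only the Taylor remainder of \eqref{eq:error-bound}.

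One loose point is the residual step. It is cleaner to rerun the tracking-error proof directly at the limiting update, where the linear residual is exactly zero, than to pass to the limit in \eqref{eq:error-bound}. Also, the Lipschitz property on that segment is an assumption, not a consequence of full row rank. It is harmless here because $\overline{\bm\Gamma}$ is a trigonometric polynomial in $\bm\theta$, so its Jacobian is globally Lipschitz.
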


Since SIMs typically satisfy $LN\gg K$, this full-row-rank regime is
particularly relevant. The above results apply to unconstrained
continuous phases; clipping and finite-resolution quantization introduce
additional hardware-projection errors.

  \subsection{Implementation and Complexity}

The Jacobian in \eqref{eq:Jraw} is efficiently computed by reusing
forward and backward fields:
$\bm x_1=\bm f$ and
$\bm x_l=\bm W_l\bm\Phi_{l-1}\bm x_{l-1}$, $l=2,\ldots,L$,
while, for user $k$,
\begin{equation}
(\bm v_{k,L})^H=(\bm h_k^+)^H,\quad
(\bm v_{k,l})^H=
(\bm v_{k,l+1})^H\bm\Phi_{l+1}\bm W_{l+1}.
\label{eq:backward}
\end{equation}
Then,
$
\frac{\partial z_k^+}{\partial\theta_{l,n}}
=
\jmathc\sqrt P\,e^{\jmathc\theta_{l,n}}
[\bm v_{k,l}^*]_n[\bm x_l]_n
$,
so all $KLN$ sensitivities follow from the same recursions. For dense
inter-layer matrices, the forward/backward passes require
$\mathcal O((K+1)LN^2)$ operations, forming
$\overline{\bm J}_\theta^+
(\overline{\bm J}_\theta^+)^T$ requires
$\mathcal O(K^2LN)$, and the $K\times K$ inversion costs
$\mathcal O(K^3)$. Hence, each one-shot update requires only
\begin{equation}
\mathcal O\!\left((K+1)LN^2+K^2LN+K^3\right),
\label{eq:complexity}
\end{equation}
i.e., one gradient-type pass, which iterative reoptimization repeats
at every iteration.

\section{Numerical Results}

The simulation parameters are summarized in Table~\ref{tab:sim}. The
SIM coupling vector $\bm f$ and inter-layer matrices $\{\bm W_l\}$
follow the Rayleigh--Sommerfeld model. The ground-truth orbit is
numerically propagated with the dominant $J_2$ perturbation
\cite{vallado2001fundamentals}, whereas the tracker is initialized from
two ephemeris positions and thereafter propagates \eqref{eq:ar2} in open
loop from its own predictions, so that the predictor is never evaluated
on the ideal trajectory from which it is derived. The reported results
therefore include the accumulated orbital model mismatch and are
averaged over $50$ independent terminal/channel realizations. Runtimes
are measured in MATLAB R2024b on an Intel Core i7-12700 at 2.1 GHz.

 \begin{table}[t]
\centering
\caption{Simulation Parameters}
\label{tab:sim}
\scriptsize
\setlength{\tabcolsep}{0.75pt}
\renewcommand{\arraystretch}{0.75}
\resizebox{\columnwidth}{!}{%
\begin{tabular}{l c l c}
\hline
Parameter & Value & Parameter & Value\\
\hline
Altitude $h$ & $1000$ km &
Carrier $f_c$ & $4$ GHz\\
Bandwidth $B$ & $50$ MHz &
Transmit power $P$ & $30$ dBW\\
Earth radius $R_E$ & $6371$ km &
$\mu_E$ & $3.986\times10^{14}$ m$^3$/s$^2$\\
Feed gain $g_t$ & $6$ dBi &
Terminal gain $g_{r,k}$ & $0$ dBi\\
Noise temperature & $290$ K &
Number of users $K$ & $9$\\
Coverage diameter & $600$ km &
Inclination & $53^\circ$\\
Element spacing & $\lambda/2$ &
Tracking interval $T_s$ & $1$ s\\
Phase resolution $B_\phi$ & $6$ bits &
$\Delta_{\max}$ & $\pi/8$\\
Regularization $\eta$ & $10^{-3}$ &
Soft-min parameter $\tau$ & $10$\\
PGA tolerance / max. iter. & $10^{-5}$ / $500$ &
Realizations & $50$\\
\hline
\end{tabular}%
}
\end{table}

For the tracking and local-error experiments, we consider the two SIM
configurations $(L,N)=(2,100)$ and $(4,225)$; the runtime experiment
additionally varies $N\in\{36,64,100,144,225\}$ for $L\in\{2,4\}$.
Three beam-control strategies are compared. \emph{Fixed SIM} retains
the initial configuration $\bm\theta[0]$; \emph{proposed one-shot
tracking} uses the predicted geometry and \eqref{eq:update}; and
\emph{full reoptimization} jointly updates all $LN$ phases through PGA
on \eqref{eq:softmin} with Armijo backtracking from the ground-truth
LoS geometry, warm-started from the preceding configuration and
iterated until convergence, as a high-complexity reference. The same
phase resolution and phase-increment constraints
are applied whenever relevant.

\begin{figure}[t]
    \centering

    \subfloat[Runtime versus $N$.]
    {\includegraphics[width=0.48\columnwidth]{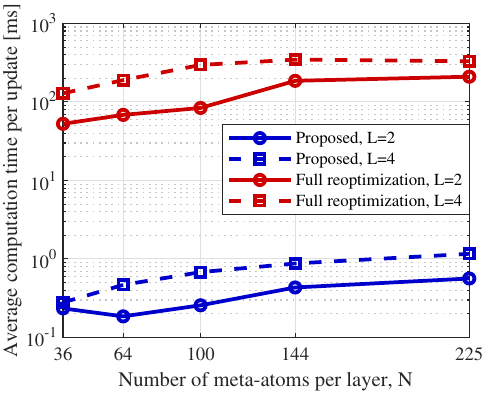}
    \label{fig:runtime}}
    \hfill
    \subfloat[Taylor residual versus phase perturbation.]
    {\includegraphics[width=0.48\columnwidth]{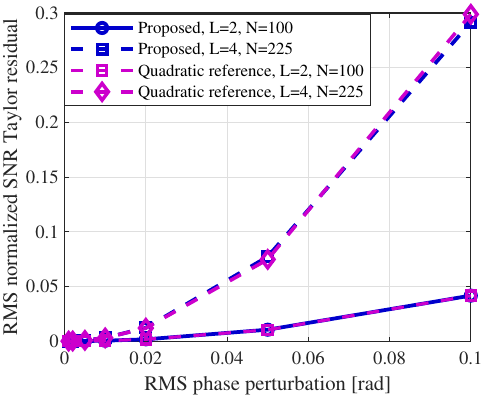}
    \label{fig:residual}}

    \caption{Computational complexity and local approximation accuracy
    of the proposed one-shot SIM tracker.} 
    \label{fig:complexity_accuracy}
\end{figure}

\begin{figure}[t]
    \centering
    \includegraphics[width=0.5\columnwidth]{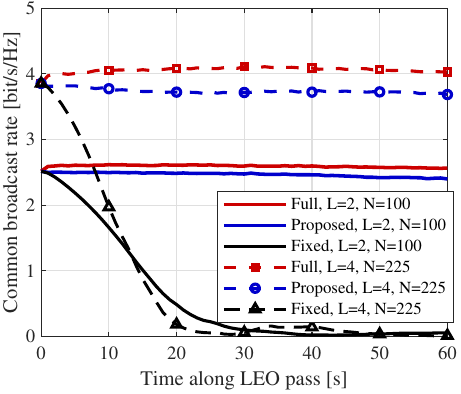}
    \caption{Common broadcast rate along the LEO pass for the fixed
    SIM, proposed one-shot tracker, and full SIM reoptimization with
    $(L,N)=(2,100)$ and $(4,225)$.}
    \label{fig:rate}
\end{figure}

 Fig.~\ref{fig:complexity_accuracy}(a) shows that the proposed one-shot
update is substantially faster than iterative SIM reoptimization over
the entire range of $N$. At $N=225$, the proposed method requires
approximately $0.7$ and $1.4$~ms for $L=2$ and $L=4$, respectively,
compared with about $230$ and $370$~ms for full reoptimization,
corresponding to roughly $350\times$ and $260\times$ lower runtimes.
The computational advantage is therefore preserved as both the number
of meta-atoms and SIM layers increase.
Fig.~\ref{fig:complexity_accuracy}(b) confirms the predicted
second-order local behavior: for both SIM configurations, the measured
Taylor residual closely follows a quadratic in the perturbation
magnitude, as predicted by the second term of \eqref{eq:error-bound},
over the considered phase-perturbation range.

Finally, Fig.~\ref{fig:rate} shows that the fixed SIM rapidly loses beam
alignment, whereas the proposed tracker maintains an almost constant
common rate throughout the pass. At $t=60$~s, the proposed method
achieves approximately $2.4$ and $3.7$~bit/s/Hz for
$(L,N)=(2,100)$ and $(4,225)$, respectively, compared with about
$2.55$ and $4.0$~bit/s/Hz under full reoptimization. Hence, the
one-shot tracker retains approximately $94\%$ and $92\%$ of the
fully reoptimized rate while avoiding iterative optimization. The
remaining gap stems from the fixed reference profile
$\bm\gamma^{\rm ref}$, which drifts from the max--min optimum as the
path losses evolve, from the first-order phase update, and from
clipping and quantization. 
\section{Conclusion}

This letter proposed a physics-informed one-shot tracking method for
SIM-enabled LEO links. Orbital prediction and an analytical multilayer
Jacobian are combined with a regularized Gauss--Newton update that
jointly adjusts all $LN$ SIM phases at the cost of one Jacobian
evaluation and a $K\times K$ solve. The derived error analysis
establishes a second-order local residual, while numerical results
demonstrate effective tracking with substantially lower complexity
than full SIM reoptimization.

\bibliographystyle{IEEEtran}
\bibliography{main}

\end{document}